\title{Minimum Segmentation for Pan-genomic Founder~Reconstruction in Linear Time}
\titlerunning{Minimum Segmentation for Pan-genomic Founder Reconstruction in Linear Time}
\author{Tuukka Norri}{Department of Computer Science, University of Helsinki, Helsinki, Finland}{tuukka.norri@helsinki.fi}{https://orcid.org/0000-0002-8276-0585}{}
\author{Bastien Cazaux}{Department of Computer Science, University of Helsinki, Helsinki, Finland}{bastien.cazaux@helsinki.fi}{https://orcid.org/0000-0002-1761-4354}{}
\author{Dmitry Kosolobov}{Department of Computer Science, University of Helsinki, Helsinki, Finland}{dkosolobov@mail.ru}{https://orcid.org/0000-0002-2909-2952}{}
\author{Veli M\"akinen}{Department of Computer Science, University of Helsinki, Helsinki, Finland}{veli.makinen@helsinki.fi}{https://orcid.org/0000-0003-4454-1493}{}
\authorrunning{T. Norri, B. Cazaux, D. Kosolobov, V. M\"akinen}
\subjclass{\ccsdesc[100]{Theory of computation~Design and analysis of algorithms, Applied computing~Life and medical sciences~Bioinformatics}}
\keywords{Pan-genome indexing, founder reconstruction, positional Burrows--Wheeler transform, range minimum query}
\newcommand{\card}[1]{\vert #1 \vert}
\newcommand{\acro}[1]{\ensuremath{\mathtt{#1}}\xspace}
\newcommand{\lcp}{\acro{LCP}}
\begin{document}
\maketitle


\begin{abstract}
Given a threshold $L$ and a set $\mathcal{R} = \{R_1, \ldots, R_m\}$ of $m$ haplotype sequences, each having length $n$, the minimum segmentation problem for founder reconstruction is to partition the sequences into disjoint segments $\mathcal{R}[i_1{+}1,i_2], \mathcal{R}[i_2{+}1, i_3], \ldots, \mathcal{R}[i_{r-1}{+}1, i_r]$, where $0 = i_1 < \cdots < i_r = n$ and $\mathcal{R}[i_{j-1}{+}1, i_j]$ is the set $\{R_1[i_{j-1}{+}1, i_j], \ldots, R_m[i_{j-1}{+}1, i_j]\}$, such that the length of each segment, $i_j - i_{j-1}$, is at least $L$ and $K = \max_j\{ |\mathcal{R}[i_{j-1}{+}1, i_j]| \}$ is minimized. The distinct substrings in the segments $\mathcal{R}[i_{j-1}{+}1, i_j]$ represent founder blocks that can be concatenated to form $K$ founder sequences representing the original $\mathcal{R}$ such that crossovers happen only at segment boundaries. We give an optimal $O(mn)$ time algorithm to solve the problem, improving over earlier $O(mn^2)$. This improvement enables to exploit the algorithm on a pan-genomic setting of haplotypes being complete human chromosomes, with a goal of finding a representative set of references that can be indexed for read alignment and variant calling.
\end{abstract}

\section{Introduction}
A key problem in \emph{pan-genomics} is to develop a sufficiently small, efficiently queriable, but still descriptive representation of the variation common to the subject under study \cite{PGWhitePaper}. For example, when studying human population, one would like to take all publicly available variation datasets (e.g. \cite{Nat15,EXAC,UK10K}) into account. Many approaches encode the variation as a graph \cite{Sch09,HPB13,SVM14,Dil15,Iqp16,vg} and then one can encode the different haplotypes as paths in this graph. An alternative was proposed in \cite{VNVPM18}, based on a compressed indexing scheme for a multiple alignment of all the haplotypes \cite{MNSV09jcb,N12,WSBS13,FGHP14,GP15}. In either approach, scalability is hampered by the encoding of all the haplotypes.

We suggest to look for a smaller set of representative haplotype sequences to make the above pan-genomic representations scalable.

Finding such set of representative haplotype sequences that retain the original contiguities as well as possible, is known as the \emph{founder sequence reconstruction} problem \cite{Ukkonen02}. In this problem, one seeks a set of $k$ founders such that the original $m$ haplotypes can be mapped with minimum amount of \emph{crossovers} to the founders. Here a crossover means a position where one needs to jump from one founder to another to continue matching the content of the haplotype in question. Unfortunately, this problem in $\NP$-hard even to approximate within a constant factor \cite{RastasU07}.

For founder reconstruction to be scalable to the pan-genomic setting, one would need an algorithm to be nearly linear to the input size. There is only one relaxation of founder reconstruction that is polynomial time solvable. Namely, when limiting all the crossovers to happen at the same locations, one obtains a \emph{minimum segmentation} problem specific to founder reconstruction~\cite{Ukkonen02}. A dynamic programming algorithm given in~\cite{Ukkonen02} has complexity $O(n^2m)$, where $m$ is the number of haplotypes and $n$ is the length of each of them.

In this paper, we improve the running time of solving the minimum segmentation problem of founder reconstruction to the optimal $O(mn)$ (linear in the input size).

The main technique behind the improvement is the use of \emph{positional Burrows--Wheeler transform} (pBWT)~\cite{Durbin14}, or more specificly its extension to larger alphabets \cite{MakinenNorri}. While the original dynamic programming solution uses $O(nm)$ time to look for the best preceding segment boundary for each column of the input, we observe that at most $m$ values in pBWT determine segment boundaries where the number of distinct founder substrings change. Minimums on the already computed dynamic programming values between each such interesting consecutive segment boundaries give the requested result. However, it turns that we can maintain the minimums directly in pBWT internal structures (with some modifications) and have to store only the last $L$ computed dynamic programming values, thus spending only $O(m + L)$ additional space, where $L$ is the input threshold on the length of each segment. The segmentation is then reconstructed by standard backtracking approach in $O(n)$ time using an array of length $n$.

\section{Notation and Problem Statement}

For a string $s = c_1 c_2 \cdots c_n$, denote by $|s|$ its length $n$. We write $s[i]$ for the letter $c_i$ of $s$ and $s[i,j]$ for the \emph{substring} $c_i c_{i + 1} \cdots c_j$. An analogous notation is used for arrays. For any numbers $i$ and $j$, the set of integers $\{x \in \mathbb{Z} \colon i \le x \le j\}$ (possibly empty) is denoted by $[i,j]$.

The input for our problem is the set $\mathcal{R} = \{R_1,\ldots,R_m\}$ of strings of length $n$, called \emph{recombinants}. A set $\mathcal{F} = \{F_1,\ldots,F_d\}$ of strings of length $n$ is called a \emph{founder set} of $\mathcal{R}$ if for each string $R_i \in \mathcal{R}$, there exists a sequence $P_i$ of length $n$ such that, for all $j \in [1,n]$, we have $P_i[j] \in [1,d]$ and $R_i[j] = F_{P_i[j]}[j]$. The sequence $P_i$ is called a \emph{parse of $R_i$ in terms of $\mathcal{F}$} and the set of parses $\{P_1, \ldots, P_m\}$ is called a \emph{parse of $\mathcal{R}$ in terms of $\mathcal{F}$}. An integer $j$ such that $P_i[j-1] \neq P_i[j]$ is called a \emph{crossover point} of the parse $P_i$; for technical reasons, the integers $1$ and $n + 1$ are called crossover points too.

We consider the problem of finding a ``good'' founder set $\mathcal{F}$ and a ``good'' corresponding parse of $\mathcal{R}$ according to a reasonable measure of goodness. Ukkonen~\cite{Ukkonen02} pointed out that such measures may contradict each other: for instance, a minimum founder set obviously has size $d = \max_{j\in [1,n]} \card{\{R_1[j], \ldots, R_m[j]\}}$, but parses corresponding to such set may have unnaturally many crossover points; conversely, $\mathcal{R}$ is a founder set of itself and the only crossover points of its trivial parse are $1$ and $n + 1$, but the size $m$ of this founder set is in most cases unacceptably large. Following Ukkonen's approach, we consider compromise parameterized solutions. The \emph{minimum founder set problem}~\cite{Ukkonen02} is, given a bound $L$ and a set of recombinants $\mathcal{R}$, to find a smallest founder set $\mathcal{F}$ of $\mathcal{R}$ such that there exists a parse of $\mathcal{R}$ in terms of $\mathcal{F}$ in which the distance between any two crossover points is at least $L$.

It is convenient to reformulate the problem in terms of segmentations of $\mathcal{R}$. A \emph{segment} of $\mathcal{R} = \{R_1,\ldots,R_m\}$ is a set $\mathcal{R}[j,k] = \{R_i[j,k] \colon R_i \in \mathcal{R}\}$. A \emph{segmentation} of $\mathcal{R}$ is a collection $S$ of disjoint segments that covers the whole $\mathcal{R}$, i.e., for any distinct $\mathcal{R}[j,k]$ and $\mathcal{R}[j',k']$ from $S$, $[j,k]$ and $[j',k']$ do not intersect and, for each $x \in [1,n]$, there is $\mathcal{R}[j,k]$ from $S$ such that $x \in [j,k]$. The \emph{minimum segmentation problem} is, given a bound $L$ and a set of recombinants $\mathcal{R}$, to find a segmentation $S$ of $\mathcal{R}$ such that $\max\{\card{\mathcal{R}[j,k]} \colon \mathcal{R}[j,k] \in S\}$ is minimized and the length of each segment from $S$ is at least $L$; in other words, the problem is to compute
\begin{equation}\label{eq:min:segmentation:problem}
\min\limits_{S\in S_L} \max \{\card{\mathcal{R}[j,k]} \colon \mathcal{R}[j,k] \in S \},
\end{equation}
where $S_L$ is the set of all segmentations in which all segments have length at least $L$.

The minimum founder set problem and the minimum segmentation problem are, in a sense, equivalent: any segmentation $S$ with segments of length at least $L$ induces in an obvious way a founder set of size $\max \{ \card{\mathcal{R}[j,k]} \colon \mathcal{R}[j,k] \in S \}$ and a parse in which all crossover points are located at segment boundaries (and, hence, at distance at least $L$ from each other); conversely, if $\mathcal{F}$ is a founder set of $\mathcal{R}$ and $\{j_1, \ldots, j_p\}$ is the sorted set of all crossover points in a parse of $\mathcal{R}$ such that $j_q - j_{q-1} \ge L$ for $q \in [2,p]$, then $S = \{\mathcal{R}[j_{q-1},j_q{-}1] \colon q \in [2,p]\}$ is a segmentation of $\mathcal{R}$ with segments of length at least $L$ and $\max \{ \card{\mathcal{R}[j,k]} \colon \mathcal{R}[j,k] \in S \} \le |\mathcal{F}|$.

Our main result is an algorithm that solves the minimum segmentation problem in the optimal $O(mn)$ time. The solution normally does not uniquely define a founder set of $\mathcal{R}$: for instance, if the built segmentation of $\mathcal{R} = \{baaaa, baaab, babab\}$ is $S = \{\mathcal{R}[1,1], \mathcal{R}[2,3], \mathcal{R}[4,5]\}$, then the possible founder sets induced by $S$ are $\mathcal{F}_1 = \{baaab, babaa\}$ and $\mathcal{F}_2 = \{baaaa, babab\}$. In other words, to construct a founder set, one concatenates fragments of recombinants corresponding to the found segments in a certain order. One can use heuristics aiming to minimize the number of crossover points in founder set constructed in such a way \cite{Ukkonen02}. Our techniques extend to implementing such heuristics fast, but we leave the details for later and focus here on the segmentation problem.

Hereafter, we assume that the input alphabet $\Sigma$ is the set $[0..|\Sigma|{-}1]$ of size $O(m)$, which is a natural assumption considering that the typical alphabet size is 4 in our problem. It is sometimes convenient to view the set $\mathcal{R} = \{R_1, \ldots, R_m\}$ as a matrix with $m$ rows and $n$ columns. We say that an algorithm processing the recombinants $\mathcal{R}$ is streaming if it reads the input from left to right ``columnwise'', for each $k$ from 1 to $n$, and outputs an answer for each set of recombinants $\{R_1[1,k], \ldots, R_m[1,k]\}$ immediately after reading the ``column'' $\{R_1[k], \ldots, R_m[k]\}$. The main result of the paper is the following theorem.

\begin{theorem}\label{thm:maintheorem}
Given a bound $L$ and recombinants $\mathcal{R} = \{R_1, \ldots, R_m\}$, each having length $n$, there is an algorithm that computes~\eqref{eq:min:segmentation:problem} in a streaming fashion in the optimal $O(mn)$ time and $O(m + L)$ space; using an additional array of length $n$, one can also find in $O(n)$ time a segmentation on which \eqref{eq:min:segmentation:problem}~is attained, thus solving the minimum segmentation problem.
\end{theorem}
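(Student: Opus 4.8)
The plan is to set $M(k)$ to be the value of \eqref{eq:min:segmentation:problem} restricted to the prefix matrix $\{R_1[1,k],\ldots,R_m[1,k]\}$, with the convention $M(0) = 0$ and $M(k) = +\infty$ whenever $1 \le k < L$ (no valid segmentation of a too-short prefix exists). The recurrence is the natural one behind Ukkonen's algorithm:
\[
M(k) \;=\; \min_{0 \le j \le k-L}\ \max\bigl\{\,M(j),\ \card{\mathcal{R}[j{+}1,k]}\,\bigr\},
\]
since the last segment of an optimal segmentation of the length-$k$ prefix must be $\mathcal{R}[j{+}1,k]$ for some split point $j$ at distance at least $L$ from $k$. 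Correctness of this recurrence is immediate by induction on $k$; the final answer is $M(n)$, and a standard parent-pointer array of length $n$ recording an optimal $j$ for each $k$ lets us recover an actual optimal segmentation by backtracking from $n$ in $O(n)$ time. So the whole content of the theorem is evaluating this recurrence, over all $k$ from $1$ to $n$, within total time $O(mn)$ and working space $O(m+L)$ (beyond that one length-$n$ array, which is only needed for the reconstruction variant).

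**The obstruction and the pBWT idea.**
The obvious evaluation costs $\Theta(n)$ per column to range over all $j$, giving the old $O(mn^2)$ bound, and even computing $\card{\mathcal{R}[j{+}1,k]}$ naively for all $j$ is too slow. The key structural observation I would exploit is that, for a fixed $k$, the function $j \mapsto \card{\mathcal{R}[j{+}1,k]}$ is monotone non-increasing as $j$ decreases from $k-1$ to $0$, and — more importantly — it changes value at only $O(m)$ distinct split points $j$ as $k$ varies, because each change corresponds to two rows whose length-$(k-j)$ suffixes ending at $k$ first diverge, and these are exactly the events tracked by the positional Burrows–Wheeler transform over the alphabet $\Sigma$ (using the large-alphabet extension of \cite{MakinenNorri}, not just the binary pBWT of \cite{Durbin14}). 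Concretely, the pBWT maintained columnwise gives, at column $k$, the sorted order of the reversed prefixes $R_i[1,k]$ and the array of longest-common-suffix values between adjacent entries in that order; the distinct values of $\card{\mathcal{R}[j{+}1,k]}$ partition $[0,k-1]$ into at most $m$ maximal intervals of split points on which this cardinality is constant.

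**Putting the recurrence together in linear time.**
On each such interval $I$ of split points where $\card{\mathcal{R}[j{+}1,k]}$ equals a common value $c$, the inner minimization contributes $\max\{\,\min_{j \in I,\, j \le k-L} M(j),\ c\,\}$, so I need (i) the endpoints and common cardinality of these $O(m)$ intervals, which the pBWT update at column $k$ supplies in $O(m)$ amortized time, and (ii) range-minimum queries $\min_{j \in I} M(j)$ over the already-computed $M$-values, restricted to $j \le k-L$. For (ii) the cleanest approach is to note that we only ever query $M(j)$ for $j \ge k-L$ is \emph{excluded}, i.e. we query an $\le k-L$ prefix of each interval; since the interval structure is itself the pBWT's internal partition, I would augment the pBWT's supporting arrays so that each maintained interval also stores the minimum of the $M$-values over its split points, updated lazily as the partition is refined/coarsened and as new $M(k)$ values are appended — this is the ``maintain the minimums directly in pBWT internal structures'' step promised in the introduction. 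The $k-L$ cutoff is handled by keeping only the last $L$ values $M(k-L),\ldots,M(k-1)$ live (older ones are never needed once the cutoff has passed them), which is where the $O(m+L)$ space bound comes from. The main obstacle — and the part that needs the most care — is exactly this bookkeeping: showing that the per-column pBWT update, the refinement of the cardinality-intervals, the insertion of the new value $M(k)$, the expiry of $M(k-L)$, and the extraction of the needed range minima can all be interleaved so that the amortized cost per column is $O(m)$ and the live data never exceeds $O(m+L)$ words. Once that amortized analysis goes through, summing over $k=1,\ldots,n$ gives the claimed $O(mn)$ time, and the reconstruction follows from the parent-pointer array as described.
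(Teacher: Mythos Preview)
Your proposal is correct and follows essentially the same route as the paper: Ukkonen's recurrence, the observation that $j\mapsto\card{\mathcal{R}[j{+}1,k]}$ takes at most $m$ distinct values whose breakpoints are read off from the pBWT divergence array, and maintaining per-interval minima of $M$ inside the pBWT structure together with a length-$L$ circular buffer of recent $M$-values. The paper's substance is precisely the bookkeeping you flag as ``the main obstacle'': it realises the interval minima via auxiliary arrays $s_k,t_k,u_k,e_k$ (Lemmas~\ref{lem:dchanges} and~\ref{lem:rcompute} plus Algorithm~\ref{fig:algorithm}), and shows the update is worst-case $O(m)$ per column, not merely amortised. One small slip: the map $j\mapsto\card{\mathcal{R}[j{+}1,k]}$ is non\emph{de}creasing as $j$ decreases (longer segments have at least as many distinct strings), not non-increasing as you wrote, though this does not affect your argument.
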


\section{Minimum Segmentation Problem}

Given a bound $L$ and a set of recombinants $\mathcal{R} = \{R_1, \ldots, R_m\}$ each of which has length~$n$, Ukkonen~\cite{Ukkonen02} proposed a dynamic programming algorithm that solves the minimum segmentation problem in $O(m n^2)$ time based on the following recurrence relation:
\begin{equation}\label{eq:ukkonen}
M(k) =
\begin{cases}
+\infty & \text{ if }k < L,\\
\card{\mathcal{R}[1,k]} & \text{ if }L \le k < 2L,\\
\min\limits_{0 \le j \le k-L} \max\{M(j),\card{\mathcal{R}[j+1,k]}\} & \text{ if } k \ge 2L.
\end{cases}
\end{equation}
It is obvious that $M(n)$ is equal to the solution \eqref{eq:min:segmentation:problem}; the segmentation itself can be reconstructed by ``backtracking'' in a standard way (see~\cite{Ukkonen02}). We build on the same approach.

For a given $k \in [1,n]$, denote by $j_{k,1}, \ldots, j_{k,r_k}$ the sequence of all positions $j \in [1, k - L]$ in which the value of $|\mathcal{R}[j,k]|$ changes, i.e., $1 \le j_{k,1} < \cdots < j_{k,r_k} \le k - L$ and $|\mathcal{R}[j_{k,h},k]| \neq |\mathcal{R}[j_{k,h}{+}1,k]|$ for $h \in [1,r_k]$. We complement this sequence with $j_{k,0} = 0$ and $j_{k,r_k+1} = k - L + 1$, so that $j_{k,0}, \ldots, j_{k,r_k+1}$ can be interpreted as a splitting of the range $[0,k - L]$ into segments in which the value $\card{\mathcal{R}[j+1,k]}$ stays the same: namely, for $h \in [0,r_k]$, one has $\card{\mathcal{R}[j + 1,k]} = \card{\mathcal{R}[j_{k,h+1},k]}$ provided $j_{k,h} \le j < j_{k,h+1}$. Hence, $\min\limits_{j_{k,h} \le j < j_{k,h+1}} \max\{M(j), \card{\mathcal{R}[j + 1,k]}\} = \max\{\card{\mathcal{R}[j_{k,h+1},k]}, \min\limits_{j_{k,h} \le j < j_{k,h+1}} M(j)\}$ and, therefore, \eqref{eq:ukkonen} can be rewritten as follows:
\begin{equation}\label{eq:mainrelation}
M(k) =
\begin{cases}
+\infty & \text{ if }k < L,\\
\card{\mathcal{R}[1,k]} & \text{ if } L \le k < 2L,\\
\min\limits_{0 \le h \le r_k} \max\{\card{\mathcal{R}[j_{k,h+1},k]}, \min\limits_{j_{k,h} \le j < j_{k,h+1}} M(j)\} & \text{ if } k \ge 2L.
\end{cases}
\end{equation}

Our crucial observation is that, for $k \in [1,n]$ and $j \in [1,k]$, one has $\card{\mathcal{R}[j + 1, k]} \le \card{\mathcal{R}[j, k]} \le m$.
Therefore, $m \ge \card{\mathcal{R}[j_{k,1},k]} > \cdots > \card{\mathcal{R}[j_{k,r_k+1},k]} \ge 1$ and $r_k < m$. Hence, $M(k)$ can be computed in $O(m)$ time using~\eqref{eq:mainrelation}, provided one has the following components:
\begin{enumerate}[(i)]
\item the numbers $\card{\mathcal{R}[j_{k,h+1},k]}$, for $h \in [0,r_k]$;
\item the values $\min\{ M(j) \colon j_{k,h} \le j < j_{k,h+1} \}$, for $h \in [0,r_k]$.
\end{enumerate}
In the remaining part of the section, we describe a streaming algorithm that reads the strings $\{R_1, \ldots, R_m\}$ ``columnwise'' from left to right and computes the components (i)~and~(ii) immediately after reading each ``column'' $\{R_1[k], \ldots, R_m[k]\}$, for $k \in [1,n]$, and all in $O(mn)$ total time and $O(m + L)$ space.

To reconstruct a segmentation corresponding to the found solution $M(n)$, we build along with the values $M(k)$ an array of size $n$ whose $k$th element, for each $k \in [1,n]$, stores $0$ if $M(k) = \card{\mathcal{R}[1,k]}$, and stores a number $j \in [1,k{-}L]$ such that $M(k) = \max\{M(j), \card{\mathcal{R}[j{+}1,k]}\}$ otherwise; then, the segmentation can be reconstructed from the array in an obvious way in $O(n)$ time. In order to maintain the array, our algorithm computes, for each $k \in [1,n]$, along with the values $\min\{ M(j) \colon j_{k,h} \le j < j_{k,h+1} \}$, for $h \in [0,r_k]$, positions $j$ on which these minima are attained (see below). Further details are straightforward and, thence, omitted.

\subsection{Positional Burrows--Wheeler Transform}\label{subsec:pbwt}

Let us fix $k \in [1,n]$. Throughout this subsection, the string $R_i[k] R_i[k-1] \cdots R_i[1]$, which is the reversal of $R_i[1,k]$, is denoted by $R'_{i,k}$, for $i \in [1,m]$. Given a set of recombinants $\mathcal{R} = \{R_1, \ldots, R_m\}$ each of which has length $n$, a \emph{positional Burrows--Wheeler transform (pBWT)}, as defined by Durbin~\cite{Durbin14}, is a pair of integer arrays $a_k[1,m]$ and $d_k[1,m]$ such that:
\begin{enumerate}
\item $a_k[1,m]$ is a permutation of $[1,m]$ such that $R'_{a_k[1],k} \le \cdots \le R'_{a_k[m],k}$ lexicographically;
\item $d_k[i]$, for $i \in [1,m]$, is an integer such that $R_{a_k[i]}[d_k[i] .. k]$ is the longest common suffix of $R_{a_k[i]}[1,k]$ and $R_{a_k[i-1]}[1,k]$, and $d_k[i] = k + 1$ if either this suffix is empty or $i = 1$.
\end{enumerate}

\begin{example}\label{firstexample}
Consider the following example, where $m = 6$, $k = 7$, and $\Sigma = \{a,c,t\}$. It is easy to see that the pBWT implicitly encodes the trie depicted in the right part of Figure~\ref{fig:pbwt:example}, and such interpretation drives the intuition behind this structure.
\begin{figure}[H]
\centering
\begin{subtable}[t]{0.12\textwidth}
\vskip-30mm
\begin{tabular}{l}
$R_1 = tttccat$\\
$R_2 = accatta$\\
$R_3 = actacct$\\
$R_4 = actccat$\\
$R_5 = cttacct$\\
$R_6 = atcacat$
\end{tabular}
\end{subtable}
\hfill
\begin{subtable}[t]{0.38\textwidth}
\centering
\vskip-30mm
\small
\begin{tabular}{r|cccccc}
$i$      & $1$ & $2$ & $3$ & $4$ & $5$ & $6$\\\midrule
$a_k[i]$ & $2$ & $6$ & $4$ & $1$ & $3$ & $5$\\
$d_k[i]$ & $8$ & $8$ & $5$ & $3$ & $7$ & $3$\\\bottomrule
\end{tabular}
\vskip4mm
\begin{tabular}{r|ccccccc}
$i$                       & $1$ & $2$ & $3$ & $4$ & $5$ & $6$ & $7$\\\midrule
$\card{\mathcal{R}[i,k]}$ & $6$ & $6$ & $4$ & $4$ & $3$ & $3$ & $2$\\\bottomrule
\end{tabular}
\end{subtable}
\hfill
\begin{subfigure}[t]{0.4\textwidth}
\begin{tikzpicture}
\tikzstyle{every node}=[anchor=base,minimum width=0.7cm,inner sep=1pt]
\draw[step=.5, gray, very thin] (0.5,0) grid (4.0,3.0);
\node at (-0.5,0.1) (m1) {$a_k[6] = 5$};
\node at (-0.5,0.6) (m2) {$a_k[5] = 3$};
\node at (-0.5,1.1) (m3) {$a_k[4] = 1$};
\node at (-0.5,1.6) (m4) {$a_k[3] = 4$};
\node at (-0.5,2.1) (m5) {$a_k[2] = 6$};
\node at (-0.5,2.6) (m6) {$a_k[1] = 2$};
\draw[cyan,line width=4pt] (0.5,0) to (1.5,0) to (1.5,0.5);
\draw[cyan,line width=4pt] (0.5,0.5) to (3.5,0.5) to (3.5,1);
\draw[cyan,line width=4pt] (0.5,1.5) to (1.5,1.5) to (1.5,1);
\draw[cyan,line width=4pt] (0.5,2) to (2.5,2) to (2.5,1);
\draw[cyan,line width=4pt] (0.5,2.5) to (4,2.5) to (4,1) to (0.5,1);
\node at (0.75,0.1) () {$c$};\node at (1.25,0.1) () {$t$};
\node at (0.75,0.6) () {$a$};\node at (1.25,0.6) () {$c$};\node at (1.75,0.6) () {$t$};\node at (2.25,0.6) () {$a$};\node at (2.75,0.6) () {$c$};\node at (3.25,0.6) () {$c$};
\node at (0.75,1.1) () {$t$};\node at (1.25,1.1) () {$t$};\node at (1.75,1.1) () {$t$};\node at (2.25,1.1) () {$c$};\node at (2.75,1.1) () {$c$};\node at (3.25,1.1) () {$a$};\node at (3.75,1.1) () {$t$};
\node at (0.75,1.6) () {$a$};\node at (1.25,1.6) () {$c$};
\node at (0.75,2.1) () {$a$};\node at (1.25,2.1) () {$t$};\node at (1.75,2.1) () {$c$};\node at (2.25,2.1) () {$a$};
\node at (0.75,2.6) () {$a$};\node at (1.25,2.6) () {$c$};\node at (1.75,2.6) () {$c$};\node at (2.25,2.6) () {$a$};\node at (2.75,2.6) () {$t$};\node at (3.25,2.6) () {$t$};\node at (3.75,2.6) () {$a$};
\end{tikzpicture}
\end{subfigure}
\caption{The pBWT for a set of recombinants $\mathcal{R} = \{R_1, \ldots, R_6\}$ and some additional information.}\label{fig:pbwt:example}
\end{figure}
\end{example}

Durbin~\cite{Durbin14} showed that $a_k$ and $d_k$ can be computed from $a_{k-1}$ and $d_{k-1}$ in $O(m)$ time on the binary alphabet. M\"akinen and Norri~\cite{MakinenNorri} further generalized the construction for integer alphabets of size $O(m)$, as in our case. For the sake of completeness, we describe in this subsection the solution from~\cite{MakinenNorri} (see Figure~\ref{fig:pbwt:basic}), which serves then as a basis for our main algorithm. We also present a modification of this solution (see Figure~\ref{fig:pbwt:modified}), which, albeit seems to be slightly inferior in theory (we could prove only $O(m\log|\Sigma|)$ time upper bound), showed better performance in practice and thus, as we believe, is interesting by itself.

\begin{figure}[htb]
\centering
\begin{subfigure}[t]{.48\textwidth}
\begin{algorithmic}
\small
	\State zero initialize $C[0,|\Sigma|]$ and $P[0, |\Sigma| - 1]$
	\For{$i \gets 1 \mathrel{\mathbf{to}} m$}
        \State $C[R_i[k] + 1] \gets C[R_i[k] + 1] + 1$;
    \EndFor
	\For{$i \gets 1 \mathrel{\mathbf{to}} |\Sigma|-1$}
        $C[i] \gets C[i] + C[i - 1]$;
    \EndFor
	\For{$i \gets 1 \mathrel{\mathbf{to}} m$}
		\State $b \gets R_{a_{k-1}[i]}[k]$;
		\State $C[b] \gets C[b] + 1$;
		\State $a_k[C[b]] \gets a_{k-1}[i]$;
		\If{$P[b] = 0$}
		    $d_k[C[b]] \gets k + 1$;
        \Else
            $\ d_k[C[b]] \gets \max\{d_{k-1}[\ell] \colon P[b]{<}\ell{\le}i\}$;
        \EndIf
		\State $P[b] \gets i$;
   \EndFor
\State ~
\State ~
\State ~
\State ~
\State ~
\State ~
\end{algorithmic}
\vskip1.6mm
\caption{The basic pBWT algorithm computing $a_k$~and~$d_k$ from $a_{k-1}$ and $d_{k-1}$.}\label{fig:pbwt:basic}
\end{subfigure}
\hfill
\begin{subfigure}[t]{.51\textwidth}
\begin{algorithmic}
\small
	\State zero initialize $C[0,|\Sigma|]$ and $P[0, |\Sigma| - 1]$
	\For{$i \gets 1 \mathrel{\mathbf{to}} m$}
        \State $C[R_i[k] + 1] \gets C[R_i[k] + 1] + 1$;
    \EndFor
	\For{$i \gets 1 \mathrel{\mathbf{to}} |\Sigma|-1$}
        $C[i] \gets C[i] + C[i - 1]$;
    \EndFor
	\For{$i \gets 1 \mathrel{\mathbf{to}} m$}
		\State $b \gets R_{a_{k-1}[i]}[k]$;
		\State $C[b] \gets C[b] + 1$;
		\State $a_k[C[b]] \gets a_{k-1}[i]$;
		\State $a_{k-1}[i] \gets i + 1$;\Comment{erase $a_{k-1}[i]$}
		\If{$P[b] = 0$}
		    $d_k[C[b]] \gets k + 1$;
        \Else
            $\ d_k[C[b]] \gets \mathsf{maxd}(P[b] + 1, i)$;
        \EndIf
		\State $P[b] \gets i$;
   \EndFor
\Function{$\mathsf{maxd}$}{$j$, $i$}
	\If{$j \ne i$}
		\State $d_{k-1}[j]{\gets}\max\{d_{k-1}[j], \mathsf{maxd}(a_{k-1}[j], i)\}$;
		\State $a_{k-1}[j] \gets i + 1$;
    \EndIf
    \State\Return{$d_{k-1}[j]$};
\EndFunction
\end{algorithmic}
\caption{The algorithm with simple RMQ; $a_{k-1}$ and $d_{k-1}$ are used as auxiliary arrays (and corrupted).}\label{fig:pbwt:modified}
\end{subfigure}
\caption{The computation of $a_k$ and $d_k$ from $a_{k-1}$ and $d_{k-1}$ in the pBWT.}
\end{figure}

Given $a_{k-1}$ and $d_{k-1}$, we are to show that the algorithm from Figure~\ref{fig:pbwt:basic} correctly computes $a_k$ and $d_k$. Since, for any $i, j \in [1,m]$, we have $R'_{i,k} \le R'_{j,k}$ iff either $R_i[k] < R_j[k]$, or $R_i[k] = R_j[k]$ and $R'_{i,k-1} \le R'_{j,k-1}$ lexicographically, it is easy to see that the array $a_k$ can be deduced from $a_{k-1}$ by radix sorting the sequence of pairs $\{ (R_{a_{k-1}[i]}[k], R'_{a_{k-1}[i],k-1}) \}_{i=1}^m$. Further, since, by definition of $a_{k-1}$, the second components of the pairs are already in a sorted order, it remains to sort the first components by the counting sort. Accordingly, in Figure~\ref{fig:pbwt:basic}, the first loop counts occurrences of letters in the sequence $\{R_i[k]\}_{i=1}^m$ using an auxiliary array $C[0,|\Sigma|]$; as is standard in the counting sort, the second loop modifies the array $C$ so that, for each letter $b \in [0,|\Sigma|{-}1]$, $C[b] + 1$  is the first index of the ``bucket'' that will contain all $a_{k-1}[i]$ such that $R_{a_{k-1}[i]}[k] = b$; finally, the third loop fills the buckets incrementing the indices $C[b] \gets C[b] + 1$, for $b = R_{a_{k-1}[i]}[k]$, and performing the assignments $a_k[C[b]] \gets a_{k-1}[i]$, for $i = 1,\ldots,m$. Thus, the array $a_k$ is computed correctly. All is done in $O(m + |\Sigma|)$ time, which is $O(m)$ since the input alphabet is $[0,|\Sigma|{-}1]$ and $|\Sigma| = O(m)$.

The last three lines of the algorithm are responsible for computing $d_k$. Denote the length of the longest common prefix of any strings $s_1$ and $s_2$ by $\lcp(s_1, s_2)$. The computation of $d_k$ relies on the following well-known fact: given a sequence of strings $s_1, \ldots, s_r$ such that $s_1 \le \cdots \le s_r$ lexicographically, one has $\lcp(s_1, s_r) = \min\{\lcp(s_{i-1}, s_{i}) \colon 1 < i \le r\}$. Suppose that the last loop of the algorithm, which iterates through all $i$ from $1$ to $m$, assigns $a_k[i'] \gets a_{k-1}[i]$, for a given $i \in [1,m]$ and some $i' = C[b]$. Let $j$ be the maximum integer such that $j < i$ and $R_{a_{k-1}[j]}[k] = R_{a_{k-1}[i]}[k]$ (if any). The definition of $a_k$ implies that $a_k[i' - 1] = a_{k-1}[j]$ if such $j$ exists. Hence, $\lcp(R'_{a_k[i' - 1],k}, R'_{a_k[i'],k}) = 1 + \min\{\lcp(R'_{a_{k-1}[\ell - 1], k-1}, R'_{a_{k-1}[\ell], k-1}) \colon j{<}\ell{\le}i\}$ if such number $j$ does exist, and $\lcp(R'_{a_k[i' - 1],k}, R'_{a_k[i'],k}) = 0$ otherwise. Therefore, since $d_k[i']$ equals $k + 1 - \lcp(R'_{a_k[i'],k}, R'_{a_k[i'-1],k})$, we have either $d_k[i'] = \max\{d_{k-1}[\ell] \colon j < \ell \le i\}$ or $d_k[i'] = k + 1$ according to whether the required $j$ exists. To find $j$, we simply maintain an auxiliary array $P[0..|\Sigma|{-}1]$ such that on the $i$th loop iteration, for any letter $b \in [0,|\Sigma|{-}1]$, $P[b]$ stores the position of the last seen $b$ in the sequence $R_{a_{k-1}[1]}[k], R_{a_{k-1}[2]}[k], \ldots, R_{a_{k-1}[i-1]}[k]$, or $P[b] = 0$ if $b$ occurs for the first time. Thus, $d_k$ is indeed computed correctly.

In order to calculate the maximums $\max\{d_{k-1}[\ell] \colon P[b] \le \ell \le i\}$ in $O(1)$ time, we build a range maximum query (RMQ) data structure on the array $d_{k-1}[1,m]$ in $O(m)$ time (e.g., see~\cite{FischerHeun}). Thus, the running time of the algorithm from Figure~\ref{fig:pbwt:basic} is, evidently, $O(m)$.

\begin{lemma}
The arrays $a_k$ and $d_k$ can be computed from $a_{k-1}$ and $d_{k-1}$ in $O(m)$ time.
\end{lemma}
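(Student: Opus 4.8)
The plan is to verify the algorithm of Figure~\ref{fig:pbwt:basic} line by line, establishing correctness first and then bounding the running time; the correctness argument has, in fact, already been essentially carried out in the paragraphs immediately preceding the statement, so the proof will mostly consist of assembling those observations.

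First I would recall why the array $a_k$ is computed correctly. The key combinatorial fact is that $R'_{i,k} \le R'_{j,k}$ lexicographically iff either $R_i[k] < R_j[k]$, or $R_i[k] = R_j[k]$ and $R'_{i,k-1} \le R'_{j,k-1}$; hence $a_k$ is obtained from $a_{k-1}$ by a stable sort of the pairs $\{(R_{a_{k-1}[i]}[k], R'_{a_{k-1}[i],k-1})\}_{i=1}^m$ on the first component only, since the second components are already sorted by the definition of $a_{k-1}$. The first two loops set up the array $C$ so that $C[b]+1$ is the starting index of the bucket of letter $b$, and the third loop performs the stable distribution $a_k[C[b]] \gets a_{k-1}[i]$ while incrementing $C[b]$; this is exactly counting sort, so $a_k$ is correct.

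Next I would recall the correctness of $d_k$. Using the standard fact that for lexicographically sorted $s_1 \le \cdots \le s_r$ one has $\lcp(s_1,s_r) = \min\{\lcp(s_{\ell-1},s_\ell) \colon 1 < \ell \le r\}$, and letting $j$ be the largest index below $i$ with $R_{a_{k-1}[j]}[k] = R_{a_{k-1}[i]}[k]$, one gets that after the assignment $a_k[i'] \gets a_{k-1}[i]$ the predecessor in $a_k$ is $a_{k-1}[j]$ (when $j$ exists), so $\lcp(R'_{a_k[i'-1],k}, R'_{a_k[i'],k}) = 1 + \min\{\lcp(R'_{a_{k-1}[\ell-1],k-1}, R'_{a_{k-1}[\ell],k-1}) \colon j < \ell \le i\}$, and $0$ otherwise. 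Translating via $d_k[i'] = k+1 - \lcp(R'_{a_k[i'],k}, R'_{a_k[i'-1],k})$ yields $d_k[i'] = \max\{d_{k-1}[\ell] \colon j < \ell \le i\}$ when $j$ exists and $d_k[i'] = k+1$ otherwise. The auxiliary array $P$ maintains, for each letter $b$, the position of its last occurrence so far in the scanned prefix of $\{R_{a_{k-1}[i]}[k]\}$, which is precisely $j$ (with $P[b]=0$ signalling ``no such $j$''); thus the conditional in the last three lines assigns the correct value, and $d_k$ is correct.

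Finally I would bound the time. The first loop is $O(m)$, the prefix-sum loop is $O(|\Sigma|)$, and the main loop is $O(m)$ if each maximum $\max\{d_{k-1}[\ell] \colon P[b] < \ell \le i\}$ is answered in $O(1)$; this is achieved by building an RMQ structure over $d_{k-1}[1,m]$ in $O(m)$ preprocessing time with $O(1)$ query time. Since by assumption $|\Sigma| = O(m)$, the total is $O(m)$. I do not expect a genuine obstacle here: the only point needing a little care is the edge handling ($d_k[1] = k+1$ via the $i=1$ case of $P[b]=0$, and the empty-suffix convention), which the $P[b]=0$ branch covers uniformly; everything else is a direct transcription of the discussion above.
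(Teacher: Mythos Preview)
Your proposal is correct and follows essentially the same approach as the paper: the lemma there is stated without a separate proof, the argument being precisely the discussion in the preceding paragraphs (counting-sort correctness for $a_k$, the $\lcp$-to-range-maximum reduction for $d_k$, and the $O(m)$ RMQ preprocessing), which you have faithfully assembled. There is nothing to add.
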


In practice the bottleneck of the algorithm is the RMQ data structure, which, although answers queries in $O(1)$ time, has a sensible constant under the big-O in the construction time. We could naively compute the maximums by scanning the ranges $d_{k-1}[P[b]{+}1,i]$ from left to right but such algorithm works in quadratic time since same ranges of $d_{k-1}$ might be processed many times in the worst case. Our key idea is to store the work done by a simple scanning algorithm to reused it in future queries. We store this information right in the arrays $a_{k-1}$ and $d_{k-1}$ rewriting them; in particular, since $a_{k-1}$ is accessed sequentially from left to right in the last loop, the range $a_{k-1}[1,i]$ is free to use after the $i$th iteration.

More precisely, after the $i$th iteration of the last loop, the subarrays $a_{k-1}[1,i]$ and $d_{k-1}[1,i]$ are modified so that the following invariant holds: for any $j \in [1,i]$, $j < a_{k-1}[j] \le i + 1$ and $d_{k-1}[j] = \max\{d'_{k-1}[\ell] \colon j \le \ell < a_{k-1}[j]\}$, where $d'_{k-1}$ denotes the original array $d_{k-1}$ before modifications; note that the invariant holds if one simply puts $a_{k-1}[j] = j + 1$ without altering $d_{k-1}[j]$. Then, to compute $\max\{d'_{k-1}[\ell] \colon j\le \ell \le i\}$, we do not have to scan all elements but can ``jump'' through the chain $j, a_{k-1}[j], a_{k-1}[a_{k-1}[j]], \ldots, i$ and use maximums precomputed in $d_{k-1}[j], d_{k-1}[a_{k-1}[j]], d_{k-1}[a_{k-1}[a_{k-1}[j]]], \ldots, d_{k-1}[i]$; after this, we redirect the ``jump pointers'' in $a_{k-1}$ to $i + 1$ and update the maximums in $d_{k-1}$ accordingly. This idea is implemented in Figure~\ref{fig:pbwt:modified}. Notice the new line $a_{k-1}[i] \gets i + 1$ in the main loop (it is commented), which erases $a_{k-1}[i]$ and makes it a part of the ``jump table''. The correctness of the algorithm is clear. But it is not immediate even that the algorithm works in $O(m\log m)$ time. We prove the upper bound $O(m\log|\Sigma|)$ on the running time, which is a quite strong guarantee considering that in our problem the alphabet often is very small.

\begin{lemma}\label{lem:mlogsigma}
The algorithm from Figure~\ref{fig:pbwt:modified} computes the arrays $a_k$ and $d_k$ from $a_{k-1}$ and $d_{k-1}$ in $O(m\log|\Sigma|)$ time.
\end{lemma}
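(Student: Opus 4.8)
The plan is to bound the total work done by the \textsf{maxd} recursion via an amortized, potential-based argument that tracks the structure of the ``jump forest'' stored in $a_{k-1}$. First I would formalize the data structure: at the start of iteration $i$ of the main loop, the entries $a_{k-1}[1,i-1]$ (together with the sentinel value $i$ reachable through them) form a forest of pointers in which every chain $j \to a_{k-1}[j] \to a_{k-1}[a_{k-1}[j]] \to \cdots$ is strictly increasing and terminates at $i$, and $d_{k-1}[j]$ caches the maximum of the original $d'_{k-1}$ over the span covered by $j$'s outgoing pointer. A single call $\textsf{maxd}(j,i)$ walks such a chain from $j$ up to $i$, and then path-compresses it so that every visited node points directly to $i+1$. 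So the cost of the loop is $\Theta(m)$ plus the total number of pointer hops performed across all $\textsf{maxd}$ calls during column $k$.

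Next I would set up the amortization. The naive fear is that a chain can be long, so a single $\textsf{maxd}$ call is expensive; the saving grace is path compression, which shortens chains for the future, exactly as in classical union–find by path compression. Here, however, the situation is much more favorable than general union–find because the query argument $i$ only ever increases, and — crucially — the chains are built by the main loop in a controlled way: when $b = R_{a_{k-1}[i]}[k]$ is processed, the only node that can acquire $i+1$ as its direct successor through the ``$a_{k-1}[i]\gets i+1$'' line and through compression is linked to the previous occurrence of the same letter $b$. I would make this precise by observing that immediately after iteration $i$, each position $j\le i$ reaches $i+1$, and the \emph{depth} of $j$ in this compressed forest is at most the number of \emph{distinct letters} that have appeared among $R_{a_{k-1}[j']}[k]$ for $j'$ in the relevant range — hence at most $|\Sigma|$. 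More carefully: define, for position $j$ at the moment just before it is ever touched again, its chain to the current frontier; I claim every such chain has length $O(\log|\Sigma|)$ in amortized terms, or alternatively that the sum of chain lengths telescopes. The cleanest route is a potential function $\Phi = \sum_{j} (\text{something like } \log \text{ of the span remaining below } j)$, or a charging argument: charge each hop either to the decrease of the number of distinct letters ``still live'' on the scanned range, or to the destruction of a pointer that will never be traversed again.

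Concretely, I would argue as follows. Over the course of column $k$, the letter $b$ is encountered $c_b := |\{i : R_{a_{k-1}[i]}[k]=b\}|$ times, and $\sum_b c_b = m$. Each $\textsf{maxd}$ call is triggered by the $t$-th occurrence of some letter $b$ (for $t\ge 2$) and it scans the portion of the jump forest lying between the $(t{-}1)$-st and $t$-th occurrences of $b$. The key structural claim is that this scanned portion, when decomposed into the hops actually taken, has been ``pre-flattened'' by all the earlier compressions, so that the number of hops charged to this call is at most $1 + (\text{number of distinct letters appearing strictly between these two occurrences that had not already been flattened past this region})$. Summing, each of the $m$ positions gets flattened past a given ``letter boundary'' at most once, and there are at most $|\Sigma|$ letter boundaries, but a finer count using the fact that a position at absolute index $j$ can be re-scanned at most $\log|\Sigma|$ times before its pointer jumps past \emph{all} remaining occurrences of every letter gives the bound $O(m\log|\Sigma|)$. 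I would phrase the final count as: total hops $\le \sum_{j=1}^m (\text{number of times } a_{k-1}[j] \text{ is updated}) $, and bound the number of updates to a single entry by $O(\log|\Sigma|)$ using a doubling argument on the size of the interval that $a_{k-1}[j]$ jumps over (each update at least doubles the number of distinct letters skipped, capping at $|\Sigma|$).

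The main obstacle, and the step I'd spend the most care on, is exactly this last claim — that a single entry $a_{k-1}[j]$ is rewritten only $O(\log|\Sigma|)$ times, or equivalently that each hop can be charged so that the total is $O(m\log|\Sigma|)$ rather than $O(m\log m)$. The naive path-compression analysis only yields $O(m\log m)$ (or inverse-Ackermann with union-by-rank, which we don't have here), so the improvement to $\log|\Sigma|$ must genuinely exploit that the ``links'' in this forest are created by at most $|\Sigma|$ distinct letters cycling through the loop: between two successive rewrites of $a_{k-1}[j]$, the set of letters whose occurrences lie in the interval spanned by $a_{k-1}[j]$ must shrink, and I need to verify it shrinks \emph{geometrically}. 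Making that geometric-shrinkage statement airtight — pinning down precisely which letters are ``responsible'' for a given pointer and showing a rewrite at least halves their count — is the crux; everything else (correctness of the caches $d_{k-1}[j]$, the $\Theta(m)$ cost of the counting-sort skeleton, the bookkeeping of sentinels $i$ vs.\ $i+1$) is routine and I would state it briefly.
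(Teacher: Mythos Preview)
Your plan correctly identifies the structure (a jump forest with path compression) and correctly diagnoses that naive path-compression analysis yields only $O(m\log m)$; you are also honest that the step bridging $\log m$ to $\log|\Sigma|$ is the crux and is left open. That is precisely where the genuine gap lies. Your proposed mechanism --- ``each update of $a_{k-1}[j]$ at least doubles the number of distinct letters skipped'' --- is not the right invariant, and I do not see how to make it true: the letters appearing in the span $[j,a_{k-1}[j])$ need not grow geometrically across successive touches of $j$, and your various alternative framings (potential on ``span remaining'', charging to ``live letters'') remain at the level of heuristics. In particular, the paper's argument does \emph{not} establish a per-position bound of $O(\log|\Sigma|)$ touches; it only bounds the \emph{total} number of touches, and individual positions may well be touched $\Theta(\log m)$ times even when $|\Sigma|$ is constant. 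So an argument that tries to cap each position separately is aiming at a statement stronger than what is needed and possibly false.

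The paper's route is quite different from anything you sketch. It never reasons about letters inside the analysis; the alphabet enters only through the single inequality $\sum_{i=1}^m \ell_i \le m|\Sigma|$ (equivalently, the average query length $\tilde\ell$ is at most $|\Sigma|$), and the whole proof then shows the stronger bound $O(m\log\tilde\ell)$. To get there, touches in a query $\mathsf{maxd}(i-\ell_i,i)$ are split into \emph{scaling} and \emph{non-scaling} according to whether the hop crosses a dyadic threshold $i-2^r$; each query contributes at most $\log\ell_i$ scaling touches, so by concavity of $\log$ the total scaling cost is $\le m\log\tilde\ell$. For non-scaling touches one shows that the \emph{interval length} $q_h(j)=a_{k-1}[j]-j$ at the $h$th non-scaling touch of $j$ satisfies $q_h(j)\ge 2^{h-1}$; combining $\sum_{j,h} q_h(j)\le \sum_i\ell_i=m\tilde\ell$ with convexity of $2^x$ (Jensen) forces $\sum_j p(j)\le 2m\log\tilde\ell$. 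The doubling is of spans, not of letters, and the passage from ``doubling'' to the final bound is a global convexity argument, not a per-entry cap. If you want to rescue your outline, the piece you are missing is exactly this two-type decomposition together with the Jensen step.
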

\begin{proof}
Fix $i \in [1,m]$. The $i$th iteration of the last loop in the algorithm computes the maximum in a range $d'_{k-1}[i',i]$, where $d'_{k-1}$ is the original array $d_{k-1}$ before modifications and $i' = P[b] + 1$ for some $b$ and $P$. Let $\ell_i = i - i'$. Denote $\tilde{\ell} = \frac{1}{m} \sum_{i=1}^m \ell_i$, the ``average query length''. We are to prove that the running time of the algorithm is $O(m\log\tilde{\ell})$, which implies the result since $m\tilde{\ell} = \sum_{i=1}^m \ell_i$ and, obviously, $\sum_{i=1}^m \ell_i \le \sigma m$.

We say that a position $j$ is \emph{touched} if the function $\mathsf{maxd}$ is called with its first argument equal to $j$. Clearly, it suffices to prove that the total number of touches is $O(m\log\tilde{\ell})$. While processing the query $\mathsf{maxd}(i{-}\ell_i, i)$, we may have touched many positions. Denote the sequence of all such position, for the given $i$, by $i_1, \ldots, i_r$; in other words, at the time of the query $\mathsf{maxd}(i{-}\ell_i, i)$, we have $i_1 = i - \ell_i$, $i_j = a_{k-1}[i_{j-1}]$ for $j \in [2,r]$, and $i_r = i$. Obviously, $i_1 < \cdots < i_r$. We say that, for $j \in [1,r{-}1]$, the touch of $i_j$ in the query $\mathsf{maxd}(i{-}\ell_i, i)$ is \emph{scaling} if there exists an integer $r$ such that $i - i_j > 2^r$ and $i - i_{j+1} \le 2^r$ (see Figure~\ref{fig:touches}). We count separately the total number of scaling and non-scaling touches in all $i$.

\begin{figure}[h]
\begin{tikzpicture}
\tikzstyle{every node}=[anchor=base,minimum width=0.7cm,inner sep=1pt]
\draw[black] (0,0) -- (13,0);
\draw[red!60, line width=1pt] (6.5,0) -- (6.5,0.2);
\draw[red!60, line width=1pt] (9.75,0) -- (9.75,0.2);
\draw[red!60, line width=1pt] (11.37,0) -- (11.37,0.2);
\draw[red!60, line width=1pt] (12.18,0) -- (12.18,0.2);
\draw[red!60, line width=1pt] (12.58,0) -- (12.58,0.2);
\draw[red!60, line width=1pt] (12.78,0) -- (12.78,0.2);
\draw[blue!50, line width=2pt] (0,0) .. controls (0.2,1) and (12.8,1) .. node[black,above]{$\ell_i$} (13,0);
\node at (6.9,0.05) () {$i{-}2^r$};
\node at (10.35,0.05) () {$i{-}2^{r-1}$};
\node at (11.95,0.05) () {$i{-}2^{r-2}$};
\node at (12.38,0.05) () {$\ldots$};
\draw[->,black!50, line width=1pt] (0,0) .. controls (0.2,-0.3) and (0.7,-0.3) .. (0.93,-0.07);
\draw[->,black!50, line width=1pt] (1,0) .. controls (1.2,-0.3) and (1.7,-0.3) .. (1.93,-0.07);
\draw[->,black!50, line width=1pt] (2,0) .. controls (2.2,-0.3) and (3.7,-0.3) .. (3.93,-0.07);
\draw[->,black!50, line width=1pt] (4,0) .. controls (4.2,-0.3) and (5.0,-0.3) .. (5.23,-0.07);
\draw[->,black!50, line width=1pt] (5.3,0) .. controls (5.5,-0.3) and (7.0,-0.3) .. (7.23,-0.07);
\draw[->,black!50, line width=1pt] (7.3,0) .. controls (7.5,-0.3) and (8.0,-0.3) .. (8.23,-0.07);
\draw[->,black!50, line width=1pt] (8.3,0) .. controls (8.5,-0.3) and (8.7,-0.3) .. (8.93,-0.07);
\draw[->,black!50, line width=1pt] (9,0) .. controls (9.2,-0.3) and (11.8,-0.3) .. (12.03,-0.07);
\draw[->,black!50, line width=1pt] (12.1,0) .. controls (12.3,-0.3) and (12.7,-0.3) .. (12.93,-0.07);
\node at (0.1,-0.4) () {$i_1$};
\node at (1.1,-0.4) () {$i_2$};
\node at (2.1,-0.4) () {$i_3$};
\node at (4.1,-0.4) () {$i_4$};
\node at (5.4,-0.4) () {$i_5$};
\node at (7.4,-0.4) () {$i_6$};
\node at (8.4,-0.4) () {$i_7$};
\node at (9.1,-0.4) () {$i_8$};
\node at (12.2,-0.4) () {$i_9$};
\node at (13.2,-0.4) () {$i_{10}$};
\filldraw[fill=white] (0,0) circle [radius=0.1];
\filldraw[fill=white] (1,0) circle [radius=0.1];
\filldraw[fill=white] (2,0) circle [radius=0.1];
\filldraw[fill=white] (4,0) circle [radius=0.1];
\filldraw[fill=red!70] (5.3,0) circle [radius=0.1];
\filldraw[fill=white] (7.3,0) circle [radius=0.1];
\filldraw[fill=white] (8.3,0) circle [radius=0.1];
\filldraw[fill=red!70] (9,0) circle [radius=0.1];
\filldraw[fill=red!70] (12.1,0) circle [radius=0.1];
\filldraw[fill=white] (13,0) circle [radius=0.1];
\end{tikzpicture}
\caption{RMQ query on a range $[i - \ell_i, i]$; scaling touches are red.}\label{fig:touches}
\end{figure}

For position $j$, denote by $p(j)$ the number of non-scaling touches of $j$. We are to prove that $P = \sum_{j=1}^m p(j) \le 2 m \log\tilde{\ell}$. Let $q_h(j)$ denote the value of $a_{k-1}[j] - j$ in the $h$th non-scaling touch of $j$, for $h \in [1,p(j)]$. Suppose that this $h$th touch happens  during the processing of a query $\mathsf{maxd}(i - \ell_i, i)$. By the definition, $j + q_h(j)$ follows $j$ in the sequence of touched positions. Since the touch of $j$ is non-scaling, we have $i - j > j + q_h(j) > 2^r$, where $r$ is the largest integer such that $i - j > 2^r$, and hence, $q_h(j) < 2^r$. Since $\mathsf{maxd}(i - \ell_i, i)$ assigns $a_{k-1}[j] \gets i + 1$, we have $a_{k-1}[j] - j > i - j > 2^r$ after the query. In other words, we had $a_{k-1}[j] - j = q_h(j) < 2^r$ before the query and have $a_{k-1}[j] - j > 2^r$ after. This immediately implies that $q_h(j) \ge 2^{h-1}$, for $h \in [1,p(j)]$, and, therefore, every position can be touched in the non-scaling way at most $O(\log m)$ times, implying $P = O(m\log m)$. But we can deduce a stronger bound. Since the sum of all values $j - a_{k-1}[j]$ for all positions $j$ touched in a query $\mathsf{maxd}(i - \ell_i, i)$ is equal to $\ell_i$, it is obvious that $\sum_{j = 1}^m \sum_{h = 1}^{p(j)} q_h(j) \le \sum_{i=1}^m \ell_i = m\tilde{\ell}$. On the other hand, we have $\sum_{j = 1}^m \sum_{h = 1}^{p(j)} q_h(j) \ge \sum_{j = 1}^m \sum_{h = 1}^{p(j)} 2^{h-1} = \sum_{j=1}^m 2^{p(j)} - m$. The well-known property of the convexity of the exponent is that the sum $\sum_{j=1}^m 2^{p(j)}$ is minimized whenever all $p(j)$ are equal and maximal, i.e., $\sum_{j=1}^m 2^{p(j)} \ge \sum_{j=1}^m 2^{P / m}$. Hence, once $P > 2 m \log\tilde{\ell}$, we obtain $\sum_{j = 1}^m \sum_{h = 1}^{p(j)} q_h(j) \ge  \sum_{j = 1}^m 2^{P / m} - m > m\tilde{\ell}^2 - m$, which is larger than $m\tilde{\ell}$ for $\tilde{\ell} \ge 2$ (the case $\tilde{\ell} < 2$ is trivial), contradicting $\sum_{j = 1}^m \sum_{h = 1}^{p(j)} q_h(j) \le m\tilde{\ell}$. Thus, $P = \sum_{j=1}^m p(j) \le 2 m \log\tilde{\ell}$.

It remains to consider scaling touches. The definition implies that each query $\mathsf{maxd}(i{-}\ell_i, i)$ performs at most $\log\ell_i$ scaling touches. Thus, it suffices to upperbound $\sum_{i=1}^m \log\ell_i$. Since the function $\log$ is concave, the sum $\sum_{i=1}^m \log\ell_i$ is maximized whenever all $\ell_i$ are equal and maximal, i.e., $\sum_{i=1}^m \log\ell_i \le \sum_{i=1}^m \log(\frac{1}m \sum_{j=1}^m \ell_j) = m\log\tilde{\ell}$, hence the result follows.
\end{proof}

\subsection{Modification of the pBWT}

We are to modify the basic pBWT construction algorithm in order to compute the sequence $j_{k,1}, \ldots, j_{k,r_k}$ of all positions $j \in [1,k-L]$ in which $\card{\mathcal{R}[j,k]} \ne \card{\mathcal{R}[j + 1,k]}$, and to calculate the numbers $\card{\mathcal{R}[j_{k,h+1},k]}$ and $\min\{ M(j) \colon j_{k,h} \le j < j_{k,h+1} \}$, for $h \in [0,r_k]$ (assuming $j_{k,0} = 0$ and $j_{k,r_k+1} = k - L + 1$); see the beginning of the section. As it follows from~\eqref{eq:mainrelation}, these numbers are sufficient to calculate $M(k)$, as defined in~\eqref{eq:ukkonen} and~\eqref{eq:mainrelation}, in $O(m)$ time. The following lemma reveals relations between the sequence $j_{k,1}, \ldots, j_{k,r_k}$ and the array $d_k$.

\begin{lemma}\label{lem:dchanges}
Consider recombinants $\mathcal{R} = \{R_1, \ldots, R_m\}$, each having length $n$. For $k \in [1,n]$ and $j \in [1,k - 1]$, one has $\card{\mathcal{R}[j,k]} \ne \card{\mathcal{R}[j + 1,k]}$ iff $j = d_k[i] - 1$ for some $i \in [1,m]$.
\end{lemma}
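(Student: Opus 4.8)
The plan is to reduce the statement to a counting argument about the sorted list of reversed prefixes that the pBWT encodes. First I would rewrite the cardinality: $\card{\mathcal{R}[j,k]}$ is the number of distinct strings among $R_1[j,k],\dots,R_m[j,k]$, and since $R_i[j,k]$ is the reversal of the length-$(k-j+1)$ prefix of $R'_{i,k}$, this number is exactly the number of distinct length-$(k-j+1)$ prefixes of the strings $R'_{1,k},\dots,R'_{m,k}$. Writing $\ell = k-j+1$ and letting $f(\ell)$ denote the number of distinct length-$\ell$ prefixes of $\{R'_{i,k}\}_{i=1}^m$, this gives $\card{\mathcal{R}[j,k]} = f(k-j+1)$ and $\card{\mathcal{R}[j+1,k]} = f(k-j)$; both are well defined, since $1 \le k-j$ and $k-j+1 \le k$ whenever $j \in [1,k-1]$.

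Second, I would compute $f(\ell)$ using the array $a_k$. Because $R'_{a_k[1],k} \le \dots \le R'_{a_k[m],k}$ lexicographically, strings sharing a common length-$\ell$ prefix occupy a contiguous block of indices, so $f(\ell) = 1 + \card{\{i \in [2,m] \colon R'_{a_k[i],k} \text{ and } R'_{a_k[i-1],k} \text{ have distinct length-}\ell\text{ prefixes}\}}$. Those two prefixes are distinct precisely when $\lcp(R'_{a_k[i],k}, R'_{a_k[i-1],k}) < \ell$. Here I invoke the defining property of $d_k$: the longest common suffix of $R_{a_k[i]}[1,k]$ and $R_{a_k[i-1]}[1,k]$ has length $k - d_k[i] + 1$ (this holds in the degenerate case too, an empty suffix corresponding to $d_k[i] = k+1$), and that length equals $\lcp(R'_{a_k[i],k}, R'_{a_k[i-1],k})$, since reversing turns common suffixes into common prefixes. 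Hence the length-$\ell$ prefixes are distinct iff $k+1-d_k[i] < \ell$, i.e.\ iff $d_k[i] > k+1-\ell$, so $f(\ell) = 1 + \card{\{i \in [2,m] \colon d_k[i] > k+1-\ell\}}$.

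Finally I substitute $\ell = k-j+1$ and $\ell = k-j$, which turn the threshold $k+1-\ell$ into $j$ and $j+1$ respectively, and subtract: $\card{\mathcal{R}[j,k]} - \card{\mathcal{R}[j+1,k]} = \card{\{i \in [2,m] \colon d_k[i] \ge j+1\}} - \card{\{i \in [2,m] \colon d_k[i] \ge j+2\}} = \card{\{i \in [2,m] \colon d_k[i] = j+1\}}$. Since $d_k[1] = k+1 \ne j+1$ for $j \in [1,k-1]$, the index $i=1$ may be included freely, so this difference equals $\card{\{i \in [1,m] \colon d_k[i] = j+1\}}$; it is nonzero exactly when $j = d_k[i]-1$ for some $i \in [1,m]$, which is the claim (and, as a by-product, re-proves the monotonicity $\card{\mathcal{R}[j,k]} \ge \card{\mathcal{R}[j+1,k]}$ used earlier). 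The only delicate point is the bookkeeping — checking that all prefix lengths in play lie in $[1,k]$ for $j \in [1,k-1]$, handling $i=1$ consistently, and covering the degenerate cases in the definition of $d_k$ — but none of this is a real obstacle.
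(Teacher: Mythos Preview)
Your argument is correct. The only cosmetic point is that you could skip the detour through the reversed strings $R'_{i,k}$ and state directly that $\card{\mathcal{R}[j,k]}$ equals one plus the number of indices $i\in[2,m]$ for which the longest common suffix of $R_{a_k[i-1]}[1,k]$ and $R_{a_k[i]}[1,k]$ is shorter than $k-j+1$; but this is purely notational.

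The route differs from the paper's. The paper argues each implication separately by constructing witnesses: given $\card{\mathcal{R}[j,k]}>\card{\mathcal{R}[j+1,k]}$, it finds two rows that agree on $[j{+}1,k]$ but differ at $j$, and then locates a consecutive pair in the $a_k$-order with the same property, yielding $d_k[i]=j+1$; the converse reads the same witnesses off the definition of $d_k$. Your argument is more quantitative: you derive the closed form $\card{\mathcal{R}[j,k]}-\card{\mathcal{R}[j+1,k]}=\card{\{i\in[1,m]\colon d_k[i]=j+1\}}$ and read off the equivalence. This buys you both directions and the monotonicity at once, and in fact your intermediate identity $f(\ell)=1+\card{\{i\in[2,m]\colon d_k[i]>k+1-\ell\}}$ is essentially the content of the paper's next lemma (Lemma~\ref{lem:rcompute}), so your approach unifies the two.
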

\begin{proof}
Suppose that $\card{\mathcal{R}[j,k]} \ne \card{\mathcal{R}[j + 1,k]}$. It is easy to see that $\card{\mathcal{R}[j,k]} > \card{\mathcal{R}[j + 1,k]}$, which implies that there are two indices $h$ and $h'$ such that $R_h[j + 1, k] = R_{h'}[j + 1, k]$ and $R_h[j] \ne R_{h'}[j]$. Denote by $a_k^{-1}[h]$ the number $x$ such that $a_k[x] = h$. Without loss of generality, assume that $a_k^{-1}[h] < a_k^{-1}[h']$. Then, there exists $i \in [a_k^{-1}[h] + 1, a_k^{-1}[h']]$ such that $R_{a_k[i - 1]}[j + 1, k] = R_{a_k[i]}[j + 1, k]$ and $R_{a_k[i - 1]}[j] \ne R_{a_k[i]}[j]$. Hence, $d_k[i] = j + 1$.

Suppose now that $j \in [1, k - 1]$ and $j = d_k[i] - 1$, for some $i \in [1,m]$. Since $j < k$ and $d_k[1] = k + 1$, we have $i > 1$. Then, by definition of $d_k$, $R_{a_k[i-1]}[j + 1, k] = R_{a_k[i]}[j + 1, k]$ and $R_{a_k[i-1]}[j] \ne R_{a_k[i]}[j]$, i.e., $R_{a_k[i]}[j + 1, k]$ can be ``extended'' to the left in two different ways, thus producing two distinct strings in the set $\mathcal{R}[j, k]$. Therefore, $\card{\mathcal{R}[j, k]} > \card{\mathcal{R}[j + 1, k]}$.
\end{proof}

Denote by $r$ the number of distinct integers in the array $d_k$. Clearly, $r$ may vary from $1$ to $m$. For integer $\ell$, define $M'(\ell) = M(\ell)$ if $1 \le \ell \le k - L$, and $M'(\ell) = +\infty$ otherwise. Our modified algorithm does not store $d_k$ but stores the following four arrays (but we still often refer to $d_k$ for the sake of analysis):

\begin{itemize}
\item $s_k[1,r]$ contains all distinct elements from $d_k[1,m]$ in the increasing sorted order;
\item $e_k[1,m]$: for $j \in [1,r]$, $e_k[j]$ is equal to the unique index such that $s_k[e_k[j]] = d_k[j]$;
\item $t_k[1,r]$: for $j \in [1,r]$, $t_k[j]$ is equal to the number of times $s_k[j]$ occurs in $d_k[1,m]$;
\item $u_k[1,r]$: for $j \in [1,r]$, $u_k[j] = \min\{M'(\ell) \colon s_k[j{-}1]{-}1 \le \ell < s_k[j]{-}1\}$, assuming $s_k[0] = 1$.
\end{itemize}

\begin{example}
In Example~\ref{firstexample}, where $m = 6$, $k = 7$, and $\Sigma = \{a,c,t\}$, we have $r = 4$, $s_k = [3, 5, 7, 8]$, $t_k = [2, 1, 1, 2]$, $e_k = [4, 4, 2, 1, 3, 1]$. Further, suppose that $L = 3$, so that $k - L = 4$. Then, $u_k[1] = M(1)$, $u_k[2] = \min\{M(2), M(3)\}$, $u_k[3] = \min\{M(4), M'(5)\} = M(4)$ since $M'(5) = +\infty$, and $u_k[4] = M'(6) = +\infty$.
\end{example}

By the definition of $d_k$, we have $d_k[1] = k + 1$ and, hence, the last element of $s_k$, $s_k[|s_k|]$, must be equal to $k + 1$. Assume that $s_k[0] = 1$. Then, the array $s_k$ defines a splitting of the range $[0, k - 1]$ into the disjoint segments $[s_k[j - 1] - 1, s_k[j] - 2]$, for $j \in [1,|s_k|]$. Note that only the first segment might be empty and only if $s_k[1] = 1$. Recall that $0 = j_{k,0} < j_{k,1} < \cdots j_{k,r_k} < j_{k,r_k + 1} = k - L + 1$. It follows from Lemma~\ref{lem:dchanges} that the first $r_k$ non-empty segments $[s_k[j - 1] - 1, s_k[j] - 2]$ correspond to the segments $[j_{k,h-1}, j_{k,h} - 1]$, for $h \in [1,r_k]$, and the $(r_k + 1)$st non-empty segment $[s_k[j - 1] - 1, s_k[j] - 2]$ covers the point $k - L$, so that $[j_{k,r_k}, j_{k,r_k+1} - 1]$ is a prefix of this segment. It is clear that $u_k[j] \ne +\infty$ only if the segment $[s_k[j - 1] - 1, s_k[j] - 2]$ intersects the range $[1, k - L]$ and, thus, corresponds to a segment $[j_{k,h-1}, j_{k,h} - 1]$, for $h \in [1,r_k + 1]$, in the above sense. Therefore, since $M'(\ell) = +\infty$ for $\ell < 1$ and $\ell > k - L$ and, thus, such values $M'(\ell)$ do not affect, in a sense, the minima stored in $u_k$, one can rewrite~\eqref{eq:mainrelation} as follows:
\begin{equation}\label{eq:computerelation}
M(k) =
\begin{cases}
+\infty & \text{ if }k < L,\\
\card{\mathcal{R}[1,k]} & \text{ if } L \le k < 2L,\\
\min\limits_{1 \le j \le |u_k|} \max\{\card{\mathcal{R}[s_k[j] - 1, k]}, u_k[j]\} & \text{ if } k \ge 2L.
\end{cases}
\end{equation}

It remains to compute the numbers $\card{\mathcal{R}[s_k[j] - 1, k]}$, for $j \in [1,|s_k|]$.

\begin{lemma}\label{lem:rcompute}
Consider a set of recombinants $\mathcal{R} = \{R_1, \ldots, R_m\}$, each of which has length~$n$. For $k \in [1,n]$ and $j \in [1,|s_k|]$, one has $\card{\mathcal{R}[s_k[j] - 1, k]} = t_k[j] + t_k[j + 1] + \cdots + t_k[|t_k|]$.
\end{lemma}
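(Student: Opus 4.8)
The plan is to connect the quantity $\card{\mathcal{R}[s_k[j]-1,k]}$, which counts distinct suffixes of length $k - s_k[j] + 2$ appearing in the recombinants, to the structure of the array $d_k$ via the trie interpretation of the pBWT. Recall that $R'_{a_k[1],k} \le \cdots \le R'_{a_k[m],k}$ lexicographically, so equal strings $R_i[j,k]$ occupy a contiguous block of consecutive positions in the order given by $a_k$. The key observation is that two consecutive entries $a_k[i-1]$ and $a_k[i]$ share the suffix of length $k - \ell + 1$ (i.e.\ $R_{a_k[i-1]}[\ell,k] = R_{a_k[i]}[\ell,k]$) precisely when $d_k[i] \le \ell$; equivalently, the number of distinct strings among $R_{a_k[1]}[\ell,k], \ldots, R_{a_k[m]}[\ell,k]$ equals $1$ plus the number of indices $i \in [2,m]$ with $d_k[i] > \ell$, since each such $i$ marks the start of a new block. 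Setting $\ell = s_k[j] - 1$, this count is $1 + \card{\{ i \in [2,m] \colon d_k[i] \ge s_k[j] \}}$, because $d_k[i]$ takes only the values in $s_k$, so $d_k[i] > s_k[j]-1$ is the same as $d_k[i] \ge s_k[j]$.

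Next I would rewrite that count using $t_k$. By definition $t_k[h]$ is the number of indices $i \in [1,m]$ with $d_k[i] = s_k[h]$, so $\card{\{ i \in [1,m] \colon d_k[i] \ge s_k[j] \}} = t_k[j] + t_k[j+1] + \cdots + t_k[|t_k|]$. Since $d_k[1] = k+1 = s_k[|s_k|]$ is the largest value in $s_k$, the index $i = 1$ is always counted in this sum (it contributes to $t_k[|t_k|]$). Removing it to restrict to $i \in [2,m]$ subtracts exactly $1$, so $1 + \card{\{ i \in [2,m] \colon d_k[i] \ge s_k[j] \}} = \card{\{ i \in [1,m] \colon d_k[i] \ge s_k[j] \}} = t_k[j] + \cdots + t_k[|t_k|]$, which is the claimed identity.

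The one place needing care — and what I expect to be the main obstacle — is the boundary/edge-case bookkeeping that makes the ``$+1$'' come out right: one must verify that the first block (the one containing $a_k[1]$) is always nonempty and that it is correctly accounted for, that the convention $d_k[1] = k+1$ together with $s_k[|s_k|] = k+1$ is consistent, and that when $s_k[j]-1 = k$ (so we are looking at length-$1$ suffixes, i.e.\ single columns) the formula still reduces correctly to $\card{\mathcal{R}[k,k]}$. All of these follow from the definitions of $a_k$, $d_k$, $s_k$, $t_k$ given above plus Lemma~\ref{lem:dchanges}, but the argument should state the block-decomposition claim as an explicit sub-step and check the extreme value $j = |s_k|$ (where the sum is just $t_k[|t_k|]$ and $\card{\mathcal{R}[k+1-1+1,k]}$... — more precisely $\card{\mathcal{R}[k,k]}$, the number of distinct letters in column $k$) separately, since that is where an off-by-one would surface.
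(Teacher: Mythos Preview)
Your proof is correct, and it takes a somewhat different route from the paper's. The paper argues by telescoping: it first handles the base case $j = |s_k|$ (your $\ell = s_k[j]-1 = k$) to get $\card{\mathcal{R}[k,k]} = t_k[|t_k|]$, and then shows, for each smaller $j$, that the increment $\card{\mathcal{R}[s_k[j]-1,k]} - \card{\mathcal{R}[s_k[j],k]}$ equals $t_k[j]$, by fixing a string $w \in \mathcal{R}[s_k[j],k]$, locating its contiguous block $[h,h']$ in the $a_k$-order, and counting the positions $i \in [h{+}1,h']$ with $d_k[i] = s_k[j]$. Your argument instead counts directly: the number of distinct strings in $\mathcal{R}[\ell,k]$ is one plus the number of ``breaks'' $i \in [2,m]$ with $d_k[i] > \ell$, and for $\ell = s_k[j]-1$ this is exactly $\sum_{h\ge j} t_k[h]$ once the $+1$ is absorbed by the always-counted index $i=1$. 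Both arguments rest on the same block structure of the $a_k$-ordered rows; yours is a bit more economical because it skips the telescoping layer, while the paper's version makes the role of each individual $t_k[j]$ (as the number of new strings appearing when the window is extended one step to the left past $s_k[j]-1$) more explicit, which ties in nicely with Lemma~\ref{lem:dchanges}.
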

\begin{proof}
Denote $\ell = k - s_k[j] + 1$, so that $\mathcal{R}[s_k[j] - 1, k] = \mathcal{R}[k - \ell, k]$. Suppose that $\ell = 0$. Note that $R_{a_k[1]}[k] \le \cdots \le R_{a_k[m]}[k]$. Since $d_k[i] = k + 1$ iff either $i = 1$ or $R_{a_k[i-1]}[k] \ne R_{a_k[i]}[k]$, it is easy to see that $\card{\mathcal{R}[k,k]}$, the number of distinct letters $R_i[k]$, is equal to the number of time $k + 1 = s_k[|s_k|]$ occurs in $d_k$, i.e., $t_k[|t_k|]$.

Suppose that $\ell > 0$. It suffices to show that $\card{\mathcal{R}[k - \ell, k]} - \card{\mathcal{R}[k - \ell + 1, k]} = t_k[j]$. For $i \in [1,m]$, denote by $R'_i$ the string $R_i[k] R_i[k - 1] \cdots R_i[k - \ell]$. Fix $w \in \mathcal{R}[k - \ell + 1, k]$. Since $R'_{a_k[1]} \le \cdots \le R'_{a_k[m]}$ lexicographically, there are numbers $h$ and $h'$ such that $R_{a_k[i]}[k - \ell + 1, k] = w$ iff $i \in [h,h']$. Further, we have $R_{a_k[h]}[k - \ell] \le R_{a_k[h + 1]}[k - \ell] \le \cdots \le R_{a_k[h']}[k - \ell]$. Thus, by definition of $d_k$, for $i \in [h + 1, h']$, we have $R_{a_k[i-1]}[k - \ell] \ne R_{a_k[i]}[k - \ell]$ iff $d_k[i] = k - \ell + 1 = s_k[j]$. Note that $d_k[h] > s_k[j]$. Therefore, the number of strings $R_i[k - \ell, k]$ from $\mathcal{R}[k - \ell, k]$ having suffix $w$ is equal to one plus the number of integers $s_k[j]$ in the range $d_k[h, h']$, which implies $\card{\mathcal{R}[k - \ell, k]} - \card{\mathcal{R}[k - \ell + 1, k]} = t_k[j]$.
\end{proof}

In particular, it follows from Lemmas~\ref{lem:dchanges} and~\ref{lem:rcompute} that $\card{\mathcal{R}[1,k]} = t_k[1] + \cdots + t_k[|t_k|]$. Thus, by~\eqref{eq:computerelation}, one can simply calculate $M(k)$ in $O(m)$ time using the arrays $t_k$ and $u_k$.

The arrays $e_k$, $s_k$, $t_k$, $u_k$ along with $a_k$ are computed from $e_{k-1}, s_{k-1}, t_{k-1}, u_{k-1}, a_{k-1}$ by Algorithm~\ref{fig:algorithm}. Let us analyze this algorithm.

\begin{algorithm}[h]
\begin{algorithmic}[1]
\small
    \State copy $s_{k-1}$ into $s_k$ and add the element $k + 1$ to the end of $s_k$, thus incrementing $\card{s_k}$;
    \State copy $u_{k-1}$ into $u_k$ and add the element $M'(k - 1)$ to the end of $u_k$, thus incrementing $\card{u_k}$;\label{lst:prepareu}
	\State zero initialize $C[0,|\Sigma|]$, $P[0, |\Sigma| - 1]$, and $t_k[1,\card{s_k}]$;\label{lst:beginbasic}
	\For{$i \gets 1 \mathrel{\mathbf{to}} m$}
        $C[R_i[k] + 1] \gets C[R_i[k] + 1] + 1$;
    \EndFor
	\For{$i \gets 1 \mathrel{\mathbf{to}} |\Sigma|-1$}
        $C[i] \gets C[i] + C[i - 1]$;
    \EndFor
	\For{$i \gets 1 \mathrel{\mathbf{to}} m$}
		\State $b \gets R_{a_{k-1}[i]}[k]$;
		\State $C[b] \gets C[b] + 1$;
		\State $a_k[C[b]] \gets a_{k-1}[i]$;
		\If{$P[b] = 0$}
		    $e_k[C[b]] \gets |s_k|$;
        \Else
            $\ e_k[C[b]] \gets \max\{e_{k-1}[\ell] \colon P[b] < \ell \le i\}$ 
        \EndIf
		\State $P[b] \gets i$;
    \EndFor\label{lst:endbasic}
    \For{$i \gets 1 \mathrel{\mathbf{to}} m$}\label{lst:fillt}
        $t_k[e_k[i]] \gets t_k[e_k[i]] + 1;$
    \EndFor
    \State $j \gets 1$;\label{lst:initj}
    \State add a new ``dummy'' element $+\infty$ to the end of $u_k$ and $u_{k-1}$;\label{lst:dummyadd}
	\For{$i \gets 1 \mathrel{\mathbf{to}} \card{s_k}$}\label{lst:fixloopbeg}
        \State $u_k[j] \gets \min\{u_k[j], u_{k-1}[i]\}$;\label{lst:correctu}
        \If{$t_k[i] \ne 0$}\label{lst:ifnonzerobeg}
            \State $tmp[i] \gets j;$
            \State $s_k[j] \gets s_k[i],\;t_k[j] \gets t_k[i],\;u_k[j + 1] \gets u_{k-1}[i + 1];$\label{lst:fixstu}
            \If{$s_k[j] - 1 > k - L \mathrel{\mathbf{and}} (j = 1 \mathrel{\mathbf{or}} s_k[j - 1] - 1 \le k - L)$}\label{lst:correctuif}
                $u_k[j] \gets \min\{u_k[j], M(k - L)\};$
            \EndIf
            \State $j \gets j + 1;$
        \EndIf\label{lst:ifnonzeroend}
    \EndFor\label{lst:fixloopend}
    \State shrink $s_k$, $t_k$, and $u_k$ to $j - 1$ elements, so that $\card{s_k} = \card{t_k} = \card{u_k} = j - 1$;\label{lst:fixlength}
    \For{$i \gets 1 \mathrel{\mathbf{to}} m$}\label{lst:fixe}
        $e_k[i] \gets tmp[e_k[i]];$
    \EndFor
\end{algorithmic}
\caption{The algorithm computing $e_k$, $s_k$, $t_k$, $u_k$, $a_k$.}\label{fig:algorithm}
\end{algorithm}

By definition, $d_{k-1}[i] = s_{k-1}[e_{k-1}[i]]$ for $i \in [1,m]$. The first line of the algorithm initializes $s_k$ so that $d_{k-1}[i] = s_k[e_{k-1}[i]]$, for $i \in [1,m]$, and $s_k[|s_k|] = k + 1$. Since after this initialization $s_k$, obviously, is in the sorted order, one has, for $i, j \in [1,m]$, $e_{k-1}[i] \le e_{k-1}[j]$ iff $d_{k-1}[i] \le d_{k-1}[j]$ and, therefore, for $\ell \in [i,j]$, one has $d_{k-1}[\ell] = \max\{d_{k-1}[\ell'] \colon i \le \ell' \le j\}$ iff $e_{k-1}[\ell] = \max\{e_{k-1}[\ell'] \colon i \le \ell' \le j\}$. Based on this observation, we fill $e_k$ in lines~\ref{lst:beginbasic}--\ref{lst:endbasic} so that $d_k[i] = s_k[e_k[i]]$, for $i \in [1,m]$, using exactly the same algorithm as in Figure~\ref{fig:pbwt:basic}, where $d_k$ is computed, but instead of the assignment $d_k[C[b]] \gets k + 1$, we have $e_k[C[b]] \gets |s_k|$ since $s_k[|s_k|] = k + 1$. Here we also compute $a_k$ in the same way as in Figure~\ref{fig:pbwt:basic}.

The loop in line~\ref{lst:fillt} fills $t_k$ so that, for $i \in [1,|s_k|]$, $t_k[i]$ is the number of occurrences of the integer $i$ in $e_k$ ($t_k$ was zero initialized in line~\ref{lst:beginbasic}). Since, for $i \in [1,m]$, we have $d_k[i] = s_k[e_k[i]]$ at this point, $t_k[i]$ is also the number of occurrences of the integer $s_k[i]$ in $d_k[1,m]$.

By definition, $s_k$ must contain only elements from $d_k$, but this is not necessarily the case in line~\ref{lst:initj}. In order to fix $s_k$ and $t_k$, we simply have to remove all elements $s_k[i]$ for which $t_k[i] = 0$, moving all remaining elements of $s_k$ and non-zero elements of $t_k$ to the left accordingly. Suppose that, for some $h$ and $i$, we have $e_k[h] = i$ and the number $s_k[i]$ is moved to $s_k[j]$, for some $j < i$, as we fix $s_k$. Then, $e_k[h]$ must become~$j$. We utilize an additional temporary array $tmp[1,|s_k|]$ to fix $e_k$. The loop in lines~\ref{lst:fixloopbeg}--\ref{lst:fixloopend} fixes $s_k$ and $t_k$ in an obvious way; once $s_k[i]$ is moved to $s_k[j]$ during this process, we assign $tmp[i] = j$. Then, $s_k$, $t_k$, $u_k$ ($u_k$ is discussed below) are resized in line~\ref{lst:fixlength}, and the loop in line~\ref{lst:fixe} fixes $e_k$ using $tmp$.

Recall that $[s_k[j - 1] - 1, s_k[j] - 2]$, for $j \in [1,|s_k|]$, is a system of disjoint segments covering $[0, k - 1]$ (assuming $s_k[0] = 1$). It is now easy to see that this system is obtained from the system $[s_{k-1}[j - 1] - 1, s_{k-1}[j] - 2]$, with $j \in [1, |s_{k-1}|]$ (assuming $s_{k-1}[0] = 1$), by adding the new segment $[k - 1, k - 1]$ and joining some segments together. The second line of the algorithm copies $u_{k-1}$ into $u_k$ and adds $M'(k - 1)$ to the end of $u_k$, so that, for $j \in [1, |u_{k-1}|]$, $u_k[j]$ is equal to the minimum of $M'(\ell)$ for all $\ell$ from the segment $[s_{k-1}[j - 1] - 1, s_{k-1}[j] - 2]$ and $u_k[|u_{k-1}|{+}1] = M'(k - 1)$ is the minimum in the segment $[k - 1, k - 1]$. (This is not completely correct since $M'$ has changed as $k$ was increased; namely, $M'(k - L)$ was equal to $+\infty$ but now is equal to $M(k - L)$.) As we join segments removing some elements from $s_k$ in the loop~\ref{lst:fixloopbeg}--\ref{lst:fixloopend}, the array $u_k$ must be fixed accordingly: if $[s_k[j - 1] - 1, s_k[j] - 2]$ is obtained by joining $[s_{k-1}[h - 1] - 1, s_{k-1}[h] - 2]$, for $j' \le h \le j''$, then $u_k[j] = \min\{u_{k-1}[h] \colon j' \le h \le j''\}$. We perform such fixes in line~\ref{lst:correctu}, accumulating the latter minimum. We start accumulating a new minimum in line~\ref{lst:fixstu}, assigning $u_k[j + 1] \gets u_{k-1}[i + 1]$. If at this point the ready minimum accumulated in $u_k[j]$ corresponds to a segment containing the position $k - L$, we have to fix $u_k$ taking into account the new value $M'(k - L) = M(k - L)$; we do this in line~\ref{lst:correctuif}. To avoid accessing out of range elements in $u_k$ and $u_{k-1}$ in line~\ref{lst:fixstu}, we add a ``dummy'' element in, respectively, $u_k$ and $u_{k-1}$ in line~\ref{lst:dummyadd}.

Besides all the arrays of length $m$, the algorithm also requires access to $M(k - L)$ and, possibly, to $M(k - 1)$. During the computation of $M(k)$ for $k \in [1,n]$, we maintain the last $L$ calculated numbers $M(k - 1), M(k - 2), \ldots, M(k - L)$ in a circular array, so that the overall required space is $O(m + L)$; when $k$ is incremented, the array is modified in $O(1)$ time in an obvious way. Thus, we have proved the following result, implying Theorem~\ref{thm:maintheorem}.
\begin{lemma}
The arrays $a_k, e_k, s_k, t_k, u_k$ can be computed from $a_{k-1}, e_{k-1}, s_{k-1}, t_{k-1}, u_{k-1}$ and from the numbers $M(k - L)$ and $M(k - 1)$ in $O(m)$ time.
\end{lemma}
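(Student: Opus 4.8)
The plan is to prove, by induction on the column index $k$, that Algorithm~\ref{fig:algorithm} maintains the four invariants characterizing its output arrays: \textbf{(I1)}~$a_k$ is the pBWT permutation of column $k$ and $d_k[i]=s_k[e_k[i]]$ for every $i\in[1,m]$; \textbf{(I2)}~$s_k$ lists the distinct values of $d_k[1,m]$ in increasing order (in particular $s_k[\card{s_k}]=k+1$); \textbf{(I3)}~$t_k[j]$ is the multiplicity of $s_k[j]$ in $d_k[1,m]$; and \textbf{(I4)}~$u_k[j]=\min\{M'(\ell)\colon s_k[j-1]-1\le\ell<s_k[j]-1\}$ with the convention $s_k[0]=1$. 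The base case $k=1$ is immediate from building the arrays on the single column; for the inductive step I would assume (I1)--(I4) for $k-1$ together with $O(1)$ access to $M(k-1)$ and $M(k-L)$, and verify (I1)--(I4) for $k$ while charging each phase of the algorithm $O(m)$ time. Given the arrays for $k$, the value $M(k)$ is then obtained in $O(m)$ from $t_k$ and $u_k$ through~\eqref{eq:computerelation}, where $\card{\mathcal{R}[s_k[j]-1,k]}$ is obtained from suffix sums of $t_k$ by Lemma~\ref{lem:rcompute}.

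First I would treat $a_k$ and $e_k$ (lines~\ref{lst:beginbasic}--\ref{lst:endbasic}). Line~1 turns $s_{k-1}$ into $s_k$ by appending $k+1$; since the appended value exceeds every entry of $s_{k-1}$, the array $s_k$ stays sorted and has distinct entries, so by (I1)--(I2) for $k-1$ the maps $i\mapsto e_{k-1}[i]$ and $i\mapsto d_{k-1}[i]=s_k[e_{k-1}[i]]$ order $[1,m]$ in the same way. Consequently, on every index range the position realizing the maximum of $e_{k-1}$ also realizes the maximum of $d_{k-1}$, and $s_k$ applied to that maximum of $e_{k-1}$ equals the maximum of $d_{k-1}$. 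Hence running the basic pBWT routine of Figure~\ref{fig:pbwt:basic} verbatim, but with the range-maximum queries taken over $e_{k-1}$ instead of $d_{k-1}$ and with the assignment $d_k[C[b]]\gets k+1$ replaced by $e_k[C[b]]\gets\card{s_k}$ (legitimate because $s_k[\card{s_k}]=k+1$), reproduces $a_k$ exactly as in the basic lemma and fills $e_k$ so that $d_k[i]=s_k[e_k[i]]$, which is (I1). A range-maximum structure on $e_{k-1}$ is built in $O(m)$~\cite{FischerHeun} (or one may reuse the in-place scanning device of Figure~\ref{fig:pbwt:modified} and Lemma~\ref{lem:mlogsigma}), and the counting-sort loops over $C$ and $P$ cost $O(m+\card{\Sigma})=O(m)$.

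Next I would argue (I3) and the compaction. The loop at line~\ref{lst:fillt} makes $t_k[i]$ count the occurrences of the value $i$ among $e_k[1,m]$; together with $d_k[i]=s_k[e_k[i]]$ this is the multiplicity of $s_k[i]$ in $d_k$, and $t_k[i]=0$ exactly when $s_k[i]$ is a leftover value of $s_{k-1}$ absent from $d_k$. The loop at lines~\ref{lst:fixloopbeg}--\ref{lst:fixloopend} deletes precisely those positions, left-shifting the survivors of $s_k$ and $t_k$ and recording the old-to-new index map in $tmp$; after the resize at line~\ref{lst:fixlength} and the relabelling $e_k[i]\gets tmp[e_k[i]]$ at line~\ref{lst:fixe}, invariants (I1)--(I3) hold for $k$. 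Every loop here runs over $[1,m]$ or over $[1,\card{s_k}]$ with $\card{s_k}\le m+1$, so this phase is $O(m)$.

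The main obstacle is (I4), the correctness of $u_k$, because it couples three effects that the algorithm must disentangle in a single left-to-right pass. \textbf{(a)}~The segment system $\{[s_k[j-1]-1,s_k[j]-2]\}_j$ for column $k$ arises from that for column $k-1$ by appending the new singleton segment $[k-1,k-1]$, whose $M'$-minimum $M'(k-1)$ is placed at the end of $u_k$ in line~\ref{lst:prepareu}, and then fusing maximal runs of consecutive old segments wherever entries of $s_k$ get deleted. \textbf{(b)}~Along a fused run corresponding to old indices $h\in[j',j'']$, the new minimum equals $\min\{u_{k-1}[h]\colon j'\le h\le j''\}$; line~\ref{lst:correctu} accumulates this across iterations into $u_k[j]$, and line~\ref{lst:fixstu} restarts the accumulator at $u_{k-1}[i+1]$ when a surviving $s_k[i]$ opens the next block (the dummy $+\infty$ appended at line~\ref{lst:dummyadd} keeps the $i+1$ access in range at the far right). \textbf{(c)}~The function $M'$ for column $k$ differs from the one for column $k-1$ only at $\ell=k-L$, where its value drops from $+\infty$ to $M(k-L)$; hence exactly the block whose segment contains $k-L$ must additionally be minimized with $M(k-L)$, which is the guarded update of line~\ref{lst:correctuif}, and I would verify that the test $s_k[j]-1>k-L\ \wedge\ (j=1\vee s_k[j-1]-1\le k-L)$ characterizes that block. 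Granting (a)--(c), (I4) holds for $k$. Since $M(k-1)$ and $M(k-L)$ are fetched in $O(1)$ from a circular buffer holding the last $L$ values of $M$, the pass is $O(m)$ overall; summing over the $n$ columns, and adding the $O(m)$ spent per column on~\eqref{eq:computerelation}, gives the $O(mn)$ time and $O(m+L)$ space of Theorem~\ref{thm:maintheorem}.
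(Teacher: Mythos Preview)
Your proposal is correct and follows essentially the same approach as the paper's own analysis of Algorithm~\ref{fig:algorithm}: the paper, too, argues (I1) via the order-preserving correspondence $d_{k-1}[i]=s_k[e_{k-1}[i]]$ that lets the Figure~\ref{fig:pbwt:basic} routine run on $e_{k-1}$ in place of $d_{k-1}$, then establishes (I2)--(I3) through the counting loop and the zero-removal compaction with the $tmp$ relabelling, and finally obtains (I4) by exactly your three ingredients---appending the singleton segment $[k-1,k-1]$, accumulating minima across fused runs in line~\ref{lst:correctu}/\ref{lst:fixstu}, and patching the single position $k-L$ where $M'$ changes via line~\ref{lst:correctuif}---while the circular buffer supplies $M(k-1)$ and $M(k-L)$. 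Your presentation is slightly more formal (explicit invariants and an inductive frame), but the argument and its cost accounting coincide with the paper's.
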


If, as in our case, one does not need $s_k, t_k, u_k$ for all $k$, the arrays $s_k$, $t_k$, $u_k$ can be modified in-place, i.e., $s_k$, $t_k$, $u_k$ can be considered as aliases for $s_{k-1}$, $t_{k-1}$, $u_{k-1}$, and yet the algorithm remains correct. Thus, we really need only 7 arrays in total: $a_k$, $a_{k-1}$, $e_k$, $e_{k-1}$, $s$, $t$, $u$, where $s$, $t$, $u$ serve as $s_k$, $t_k$, $u_k$ and the array $tmp$ can be organized in place of $a_{k-1}$ or $e_{k-1}$. It is easy to maintain along with each value $u_k[j]$ a corresponding position $\ell$ such that $u_k[j] = M'(\ell)$; these positions can be used then to restore the found segmentation of $\mathcal{R}$ using backtracking (see the beginning of the section). To compute $e_k$, instead of using an RMQ data structure, one can adapt in an obvious way the algorithm from Figure~\ref{fig:pbwt:modified} rewriting the arrays $a_{k-1}$ and $e_{k-1}$ during the computation, which is faster in practice but theoretically takes $O(m\log\sigma)$ time by Lemma~\ref{lem:mlogsigma}. We do not discuss further details as they are straightforward.

\end{document}